\documentclass{llncs}

\usepackage[usenames,dvipsnames]{xcolor}
\usepackage[utf8]{inputenc}
\usepackage{amsmath,amsfonts,amssymb}
\usepackage{stmaryrd,latexsym}
\usepackage{theorem}
\usepackage{cite}
\usepackage{enumerate,hyperref}
\usepackage{verbatim}
\usepackage{graphicx}
\usepackage{tikz}
\usepackage{lscape}
\usetikzlibrary{arrows}
\usepackage{listings}
\usetikzlibrary{calc,fit,shapes,positioning,arrows,external}

\newcommand{\inference}[3][]{\text{#1}\frac{\displaystyle #2}{\displaystyle #3}}

\newcommand{\E}[2]{\{{#1}\}_{#2}}
\newcommand{\D}[2]{D_{#2}({#1})}

\newcommand{\Prt}{\mathcal{P}}

\newcommand{\M}{\mathcal{M}}
\newcommand{\N}{\mathcal{N}}
\newcommand{\Hedges}{\mathcal{H}}
\newcommand{\R}{\mathcal{R}}
\newcommand{\HB}{\mathcal{HB}}
\newcommand{\V}{\mathcal{V}}

\newcommand{\ds}{\operatorname{ds}}
\newcommand{\cc}{\operatorname{mcd}}
\newcommand{\coa}{\operatorname{ad}}
\newcommand{\cri}{\operatorname{CD}}
\newcommand{\mdd}{\operatorname{mdd}}
\newcommand{\pr}{\operatorname{pr}}
\newcommand{\fv}{\operatorname{fv}}
\newcommand{\fn}{\operatorname{fn}}

\newcommand\sd[1]{[\![{#1}]\!]}

\newcommand{\T}{\mbox{\bf true}}
\newcommand{\F}{\mbox{\bf false}}

\newcommand{\defeq}{\triangleq}
\newcommand{\term}{t}
\newcommand{\typeN}{\mathsf{N}}
\newcommand{\typeB}{\mathsf{B}}
\newcommand{\typeC}{\mathsf{C}}

\newcommand{\fakemiddlepipe}[2]{{\kern-\nulldelimiterspace\left.\vphantom{{#1}{#2}}\right|}}

\makeatletter
\newcommand\xLeftrightarrow[2][]{%
	\ext@arrow 9999{\Longleftrightarrowfill@}{#1}{#2}}
\newcommand\Longleftrightarrowfill@{%
	\arrowfill@\Leftarrow=\Rightarrow}
\newcommand{\xRightarrow}[2][]{\ext@arrow 0359\Rightarrowfill@{#1}{#2}}
\makeatother

\lstset{
	tabsize=2,
	emph={for, each, select, true, fail, if, then, else, where, let, such, that, and},
	emphstyle={\bfseries},
	numbers=left,
	frame=single,
	numbersep=5pt,                   
	numberstyle=\tiny\color{black!70},
	lineskip={3pt}
}

\pagestyle{plain}

\title{Deciding Hedged Bisimilarity}%
\author{Alessio Mansutti \and Marino Miculan}
\institute{
\href{http://mads.uniud.it}{Laboratory of Models and Applications of Distributed Systems}\\
		Department of Mathematics, Computer Science and Physics, 
University of Udine \\
\email{alessio.mansutti@gmail.com} \quad
\email{marino.miculan@uniud.it}
}

\begin{document}
\maketitle

\begin{abstract}
The spi-calculus is a formal model for the design and analysis of cryptographic protocols: many security properties, such as authentication and strong confidentiality, can be reduced to the verification of behavioural equivalences between spi processes.
In this paper we provide an algorithm for deciding \emph{hedged bisimilarity} on finite processes, which is equivalent to barbed equivalence (and coarser than framed bisimilarity). 
This algorithm works with any term equivalence satisfying a simple set of conditions, thus encompassing  many different encryption schemata.  
\begin{keywords}
security, cryptographic protocols, spi-calculus, bisimilarity.
\end{keywords}
\end{abstract}

\section{Introduction}

The spi calculus, introduced by Abadi and Gordon in \cite{ag:spi}, is a process calculus designed for the description and formal verification of cryptographic protocols. Many security properties, such as authentication and strong confidentiality, can be reduced to the verification of \emph{may-testing equivalences} between spi processes.
Since may-testing equivalences are difficult to check in practice, other behavioural equivalences have been put forward for this calculus. In \cite{AbadiG98} Abadi and Gordon defined  \emph{framed bisimilarity}, a bisimulation-style equivalence which is a sound approximation of may-testing equivalence. Later, other context-sensitive equivalences have been proposed; in particular, Borgstr\"om and Nestmann defined \emph{hedged bisimilarity} \cite{bn:mscs05}, which is shown to be equivalent to \emph{barbed equivalence} and strictly coarser than framed bisimilarity. We refer to  \cite{bn:mscs05} for a detailed comparison of these equivalences, which we summarize in Figure~\ref{fig:bisims}.

\begin{figure}[t]
	\centering
	\begin{tikzpicture}[iff/.style={implies-implies,double equal sign distance},
	implies/.style={-implies,double equal sign distance}]
	\node (1) {Framed};
	\node[draw, rectangle] (2) [below right=20pt of 1]{Hedged};
	\node (3) [right=35pt of 2]{Alley};
	\node (4) [right=35pt of 3]{Barbed};
	\node (5) [above right=20pt of 1]{Fenced};
	\node (6) [right=35pt of 5]{Trellis};
	
	\path (1) edge[implies] node {} (2);
	\path (2) edge[iff] node {} (3);
	\path (5) edge[implies] node {} (2);
	\path (5) edge[iff] node {} (6);
	\path (6) edge[implies] node {} (3);
	\path (3) edge[iff] node {} (4);
	\path (5) edge[implies] node {} (1);
	\end{tikzpicture}
	\caption{Behavioural equivalences for the spi calculus \cite{bn:mscs05}.}
	\label{fig:bisims}
\end{figure}
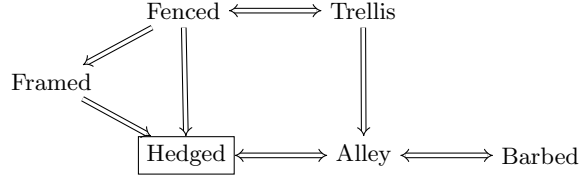

H\"uttel  \cite{huttel:infinity02} proved that framed bisimilarity is decidable on finite processes.
In this paper, we extend this result, proving that also hedged bisimilarity (and hence barbed equivalence) is decidable on finite processes. 
Moreover, we do not choose a specific congruence over terms; rather, the algorithm works with any congruence relation, as long as some mild conditions are satisfied. Therefore, our algorithm can be readily applied to different encryption/decryption schemata just by changing the congruence rules.
These conditions are introduced in Section~\ref{sec:spi}, where we recall also the syntax and \emph{late} operational semantics of spi-calculus.
In Section~\ref{sec:hedged} we define the notion of \emph{hedged bisimilarity} using this late semantics, and in Section~\ref{sec:decid} we show that it is decidable on \emph{finite} spi-calculus processes (i.e. processes without replication); an algorithm in pseudo-code is provided.
Some concluding remarks and directions for future work are in Section~\ref{sec:concl}.

\section{The spi calculus}\label{sec:spi}
The \emph{spi-calculus} extends the $\pi$-calculus with terms and primitives for encryption and decryption.  In this section we define the variant we consider in this paper.

\subsection{Syntax}

\paragraph{Terms} 
We first define the set of terms that can be used by processes, following \cite{bn:mscs05}.

\begin{definition}[Terms]
	Let $\N$ be a countable set of names ranged over by $a,b,c,n\dots$, and $V$ a countable set of variable symbols, ranged over by $x,y,z\dots$.	
	The set of spi-calculus terms is given by the grammar
	\begin{align*}
		A &::= a \mid x \\
		\term &::= A \mid (\term_1, \term_2) \mid \pi_1(\term) \mid \pi_2(\term) \mid \E{\term_1}{\term_2} \mid \D{\term_1}{\term_2}\\
		\phi &::= \T \mid \lnot\phi \mid \phi_1\land\phi_2 \mid [\term_1 = \term_2]
	\end{align*}
\end{definition}
Intuitively, $\E{\term_1}{\term_2}$ denotes the term $\term_1$ encrypted using key $\term_2$, and $(\term_1, \term_2)$ denotes the pair whose components are terms $\term_1$ and $\term_2$. Correspondingly, we have the destructor $\D{\term_1}{\term_2}$, which decrypts $\term_1$ using key $\term_2$, and the two projections $\pi_1(\term)$,  $\pi_2(\term)$. We use $\F$ as a shorthand for $\neg\T$.

The set of (free) variables of a term $t$ is denoted by $\fv(t)$; notice that there are no binding operators in terms. As usual, a term $t$ is said to be \emph{ground} if $\fv(t) = \emptyset$, i.e. without (free) variables. It is said to be a \emph{message} if it is ground and without occurrences of $\pi_1(.)$, $\pi_2(.)$ and $\D{.}{.}$ operators. 
We will denote with $\M$ the set of all messages, ranged over by $M,N$.

Unlike \cite{ag:spi, huttel:infinity02}, our terms are typed.
Types are defined by the following syntax:
\begin{equation*}
\tau ::= \typeN \mid \typeB \mid \tau_1 \times \tau_2 \mid \typeC(\tau)
\end{equation*}
where $\typeN, \typeB$ are the base types of names and booleans respectively, and $\typeC(\tau)$ is the type of encrypted terms of type $\tau$.
Formally, the typing judgment $t:\tau$ over ground terms is defined by the following rules.
\begin{gather*}
\inference{a\in\N}{a:\typeN}\quad
\inference{}{\T:\typeB}\quad 
\inference{\phi:\typeB}{\lnot\phi:\typeB}\quad
\inference{\phi_1:\typeB\quad\phi_2:\typeB}{\phi_1\land\phi_2:\typeB}\quad
\inference{}{[\term_1=\term_2]:\typeB}
\\
\inference{\term_1:\tau_1\quad\term_2:\tau_2}{(\term_1,\term_2):\tau_1\times\tau_2}\quad
\inference{\term:\tau_1\times\tau_2}{\pi_i(\term):\tau_i} i=1,2
\quad
\inference{\term_1:\tau\quad \term_2:\typeN}{\E{\term_1}{\term_2}:\typeC(\tau)}\quad
\inference{\term_1 : \typeC(\tau) \quad \term_2: \typeN}{\D{\term_1}{\term_2}:\tau}
\end{gather*}

\paragraph{Congruence over terms}
Terms are taken up-to some structural congruence $\equiv$, whose aim is to express the evaluation internal to processes, in particular the execution of encryption/decryption algorithms.
Differently from \cite{ag:spi,bn:mscs05,huttel:infinity02}, we aim to account for different type of encryption algorithms that can be expressed by choosing the structural equivalence $\equiv$. 
To this end, we provide a general definition of ``coherent'' congruence:
\begin{definition}[Coherent Congruence]\label{def:cong}
A congruence relation $\equiv$ over terms is \emph{coherent} if the following hold:
\begin{enumerate}
	\item (Type preservation) $\forall t_1:\tau_1, t_2: \tau_2 \forall j,k \in \N: \E{t_1}{k} \equiv \E{t_2}{j} \Rightarrow \tau_1 = \tau_2$
	\item (Equivariance) $\forall t_1 , t_2 : \typeC(\tau)\ \forall a \in \N\ \forall b \in \N \setminus (n(t_1) \cup n(t_2)): t_1 \equiv t_2 \iff t_1\{b/a\} \equiv t_2\{b/a\}$
	\item (Deterministic decryption) $\forall t_1 , t_2 : \typeC(\tau):\ t_1 \equiv t_2 \Rightarrow \ds(t_1) = \ds(t_2)$
\end{enumerate}
where $n(t)$ is the set of all names occurring in $t$,
$t\{s/a\}$ is the syntactical substitution replacing all occurrences of $a$ in $t$ with the term $s$, and
$\ds(.)$ is defined inductively by the clauses
\[
\ds(n \in \N) = n\quad
\ds((t_1,t_2)) = (\ds(t_1),\ds(t_2))\quad
\ds(\E{t_1}{t_2}) = \ds(t_1)
\]
\end{definition}
The first condition says that two encrypted terms can be considered equal only if they encrypt messages with the same type.
The second condition imposes the absence of special names (and keys): if a property holds for a name, then it must hold for every fresh name.
The third condition says that congruence must be consistent with decryption: the decryption of a message $M$ is thus guaranteed to be deterministic w.r.t. all messages in the same equivalence class of $M$.

It is easy to check that the equivalence used in \cite{ag:spi,bn:mscs05,huttel:infinity02} respects these conditions. Other encryption algorithms (and other abstract data types) can be considered by adapting the congruence relation, as long as it remains coherent; for example, we can analyze encryption protocols with commutative ciphers (like RSA) by adding the axiom $\forall M :\tau\ \forall k,j \in \N:\ \E{\E{M}{k}}{j} \equiv \E{\E{M}{j}}{k}$.

\paragraph{Processes}
We can now define the processes of the spi-calculus.
\begin{definition}[Processes]
The spi-calculus  processes are defined as follows:
\begin{equation*}
P ::=\ 0 \mid A(x).P \mid \overline{A}\langle \term \rangle.P \mid P_1|P_2 \mid P_1 + P_2 \mid (\nu a)P  \mid !P 
\mid \phi.P \mid \text{let } x = \term \text{ in } P
\end{equation*}
where $t$ and $\phi$ are respectively a well-typed term and a boolean formula, $x$ is a variable, $a$ is a name and $A$ can be both a name or a variable.  
\end{definition}
The syntax above are the usual ones from $\pi$-calculus, with these differences: 
\begin{itemize}
	\item input/output operations exchange terms, not only names and variables;
	\item $\phi.P$ is the \emph{guard operator}, that behaves as $P$ if the boolean formula $\phi$ holds;
	\item $\textit{let } x = \term \textit{ in } P$ is the \emph{let operator} that computes the value of $\term$, assigns it to the variable $x$ and then executes $P$.
\end{itemize}

Without loss of generality, we can assume that destructors ($\pi_1(.)$, $\pi_2(.)$ and $\D{.}{.}$) do not occur in boolean predicates $[t_1=t_2]$, nor in the argument of output operations $\overline{A}\langle \term \rangle.P$, since these cases can be simulated using the \emph{let}.
For instance, $\overline{a}\langle \pi_1(t) \rangle.P$  is equivalent to $\textit{let } x=\pi_1(t) \textit{ in }  \overline{a}\langle x \rangle.P$, for $x\not\in \fv(P)$.

Processes are taken up-to the usual structural congruence familiar from the $\pi$-calculus theory:
\begin{align*}
P &\equiv Q   \hspace{1.5cm} \makebox[2cm]{ if $P$ and $Q$ are $\alpha$-equivalent} &
!P &\equiv P | !P
\\
P|Q &\equiv Q|P & 
(P|Q)|R &\equiv P(Q|R)
\\
P + Q &\equiv Q + P &
(P + Q) + R &\equiv P + (Q + R) 
\\
(\nu m)(P|Q) &\equiv (\nu m)P | Q \text{ if } m \not\in\fn(Q) &
(\nu m)(\nu n)P &\equiv (\nu n)(\nu m)P 
\\
(\nu m) 0 &\equiv 0 \quad
P | 0 \equiv P 
\quad
\inference{P \equiv Q}{P|R \equiv Q|R} & 
\inference{P \equiv Q}{P+R \equiv Q+R} &\quad
\inference{P \equiv Q}{(\nu m) P \equiv (\nu m) Q}
\end{align*}

\subsection{Semantics} 
To define the operational semantics of the spi-calculus we need to evaluate terms and boolean expressions.
Evaluation is defined over \emph{well-typed} terms, where each type denotes a set of values:
\begin{definition}[Interpretation of types]\label{def:typesem}
	The interpretation of types $\sd{\cdot}:\text{Types} \to \text{Set}$ is defined recursively as follows:
\begin{align*}
\sd{\typeN} &= \N  \\
\sd{\typeB} &= \{\T,\F\} \\
\sd{\typeC(\tau)} &= \{\E{M}{k} \mid M:\typeC(\tau), k\in\N \}/{\equiv} \\
\sd{\tau_1\times\tau_2} &= \{ (M_1,M_2) \mid M_1\in \sd{\tau_1}, M_2\in\sd{\tau_2}\}/{\equiv}
\end{align*}
\end{definition}

\begin{definition}[Evaluation]
The \emph{evaluation} for ground terms and boolean expressions is a partial function
$\sd{\cdot}:M_\tau \rightharpoonup \sd{\tau}$
(implicitly parametric in the type $\tau$) defined recursively as follows:
\begin{align*}
\sd{a} &= a \in \N
& 
\sd{(\term_1,\term_2)} &= (\sd{\term_1} , \sd{\term_2})
\\
\sd{\T} &= \T
&
\sd{\pi_1(\term)} &= v_1 \text{ if } \sd{\term} = (v_1,v_2)
\\
\sd{\phi_1\land\phi_2} &= \sd{\phi_1} \land \sd{\phi_2}
&
\sd{\pi_2(\term)} &= v_2  \text{ if } \sd{\term} = (v_1,v_2)
\\ 
\sd{\lnot\phi} &= \lnot\sd{\phi}
&
\sd{\E{\term_1}{\term_2}} &= \E{\sd{\term_1}}{\sd{\term_2}} %
\\
\sd{[\term_1 = \term_2]} &= \T \hspace{1.8cm} \makebox[1cm]{if $\sd{\term_1} \equiv \sd{\term_2}; \F$ otherwise}
\\
\sd{\D{\term_1}{\term_2}} & = t \in \tau \hspace{3cm}
  \makebox[1cm]{ if $k = \sd{\term_2} \in \N$ and $\sd{\term_1} \equiv \E{t}{k} \in \sd{\typeC(\tau)}$}
\end{align*}
We will write $\sd{t}\!\downarrow$ if the evaluation of $t$ produces a value, $\sd{t}\!\uparrow$ otherwise -- i.e. when the evaluation of a decryption or a projection is unsuccessful.
\end{definition}

Following \cite{ag:spi,huttel:infinity02} (and differently from \cite{bn:mscs05}) we define a \emph{late input} style operational semantics. 
We first define the \emph{reduction} relation, which describes how processes unfold and execute internal computations in preparation for a reaction. 
\[
\inference{\sd{\phi} = \T}{\phi.P > P} 
\quad
\inference{\sd{\term}=v}{\text{let } x = \term \text{ in } P > P\{v/x\}}
\quad
\inference{P > P'}{P \equiv P'}
\]
where $v$ is the value of $t$ (up-to congruence), if defined.

The next step is to define \emph{abstractions} $F$ and \emph{concretions} $C$:
\[
F ::= (x)P
\qquad
C ::= (\nu m_1\dots m_n)\langle M \rangle Q \quad n\geq 0
\]
where the variable $x$ is bound in $P$ and names $m_1,\dots,m_n$  are bound in $M,Q$.
As we will see in the semantic rules, an input $a(x).P$ becomes an abstraction after performing a transition labeled $a$; this abstraction can be seen as a process waiting to receive a message on channel $a$.
An output $\overline{a}\langle M \rangle.Q$ becomes a concretion $(\nu \vec{m})\langle M \rangle Q$, where $\vec{m}$ are fresh names that can appear in $M$ and $P$. This concretion can be seen as a process waiting to send a message on the channel $a$.

An abstraction $(x)P$ and a concretion $(\nu \vec{m})\langle M \rangle Q$ can interact via synchronization resulting in a process where the message $M$ is received by $(x)P$.
In order to define this interaction, we need to extend restriction and parallel composition operators to abstractions and concretions, as follows:
\begin{align*}
(\nu m)(x)P &\triangleq (x)(\nu m)P\\
(\nu n)(\nu \vec{m})\langle M \rangle P &\triangleq \begin{cases}
(\nu n, \vec{m})\langle M \rangle P &\text{if } n \in \fn(M)\\
(\nu \vec{m})\langle M \rangle (\nu n) P &\text{otherwise}
\end{cases}\\
R | (x)P &\triangleq (x)(R|P) \quad \text{where } x \not\in \fv(R)\\
R | (\nu\vec{m})\langle M \rangle P &\triangleq (\nu\vec{m})\langle M \rangle (R|P) \quad \text{where } \{\vec{m}\} \cap \fn(R) = \emptyset
\end{align*}
where the two last definitions can be always applied, by $\alpha$-conversion.

Finally, the operational semantics of the spi-calculus is represented by a labelled transition relation $P\overset{\alpha}{\longrightarrow} D$, where $D$ ranges over processes, concretions and abstractions, and $\alpha\in \{a,\overline{a} \mid a\in \N \}\cup\{\tau\}$ is the label.
As usual, the transition labelled with $\tau$ is also called silent transition or $\tau$-transition.
The relation is defined by the rules given in Figure~\ref{fig:lts}.
As usual, we also define
\[
P \Longrightarrow Q \overset{\triangle}{\iff} P\overset{\tau}{\longrightarrow}^*Q
\qquad\qquad
P \overset{\alpha}{\Longrightarrow} Q \overset{\triangle}{\iff} P \Longrightarrow \overset{\alpha}{\longrightarrow} Q
\]

\begin{figure}
\begin{gather*}
\inference[(input)]{a\in \N}{a(x) P \overset{a}{\longrightarrow} (x)P}\quad\quad
\inference[(output)]{\sd{t}=M \quad a\in\N}{\overline{a}\langle\term\rangle P \overset{\overline{a}}{\longrightarrow} (\nu)\langle M\rangle P}
\\
\inference[(interaction)]{P \overset{n}{\longrightarrow} (x)P'\quad 
	Q \overset{\overline{n}}{\longrightarrow} (\nu \vec{m})\langle M \rangle Q'
	}{P|Q \overset{\tau}{\longrightarrow} (\nu \vec{m})(P'\{M/x\}|Q')}\{\vec{m}\} \cap \fn(P') = \emptyset
\\
\inference[(restriction)]{P \overset{\alpha}{\longrightarrow} D \quad\alpha \not\in \{m,\overline{m}\}}{(\nu m) P \overset{\alpha}{\longrightarrow} (\nu m) D}\quad\quad
\inference[(parallel)]{P \overset{\alpha}{\longrightarrow} P'}{P|Q \overset{\alpha}{\longrightarrow} P' | Q}
\\
\inference[(sum)]{P \overset{\alpha}{\longrightarrow} P'}{P + Q \overset{\alpha}{\longrightarrow} P'}\quad\quad
\inference[(equivalence)]{P \equiv Q\quad Q \overset{\alpha}{\longrightarrow} Q'\quad Q' \equiv P'}{P \overset{\alpha}{\longrightarrow} P'}
\end{gather*}
\caption{Late operational semantics of the spi-calculus.}
\label{fig:lts}
\end{figure}

\section{Hedged bisimilarity}\label{sec:hedged}
In this section we define the \emph{hedged bisimilarity}, introduced in \cite{bn:mscs05}.
The basic idea is to mimic the \emph{frame-theory} pairs of the \emph{framed bisimilarity} defined in \cite{AbadiG98}, but dropping the separate \emph{frame} component and including corresponding names as part of the \emph{theory}. The resulting theory is then called a \emph{hedge}.
\begin{definition}[Hedge]
Let $\M$ be a set of messages. A hedge is a finite subset of $\M^2$. We denote by $\Hedges$ the set of all hedges.

A hedge $h$ is \emph{consistent} if and only if it is \emph{pair-free} (i.e. all messages in $h$ are not pairs) and whenever $(M,N) \in h$ we have that:
\begin{itemize}
	\item $M \in \N \iff N \in \N$;
	\item for all $(M',N') \in h$, if $M \equiv M'$  or $N \equiv N'$ then $M=M'$ and $N=N'$; 
	\item if $M \equiv \E{M'}{k}$ e $N \equiv \E{N'}{j}$ then $k \not\in \pi_1(h)$ and $j \not\in \pi_2(h)$.
\end{itemize}
\end{definition}
The first condition requires a consistent hedge to match names with names.
The second one requires that $\pi_1(h)$ and $\pi_2(h)$ are in bijection.
The last one requires that all encrypted messages cannot be decrypted using keys in $h$: encrypted messages in a consistent hedge are not reducible.

Alongside (consistent) hedges, we define \emph{synthesis}, \emph{analysis}, \emph{irreducibles}.
The \emph{synthesis} of a hedge $h$ is the set of message pairs that can be built from $h$.
\begin{definition}[Synthesis] The synthesis $S(h)$ of a hedge $h$ is defined as the least subset of $\M^2$ that satisfies:
	\begin{itemize}
		\item $h \subseteq S(h)$;
		\item if $(M,N) \in S(h)$, $(k,j) \in S(h)$ and $k,j \in \N$ then $(\E{M}{k},\E{N}{j}) \in S(h)$;
		\item if $(M_1,N_1) \in S(h)$ and $(M_2,N_2) \in S(h)$ then $((M_1,M_2),(N_1,N_2)) \in S(h)$.
	\end{itemize}
	We write $h \vdash M \leftrightarrow N$ for $(M,N) \in S(h)$, and in this case we say that $M$ and $N$ are \emph{homologous} w.r.t. $h$.
 \end{definition}

The \emph{analysis} of a hedge is the set of all message pairs obtained by ``opening" the messages of $h$ via decryption or projection.
The \emph{irreducibles} are those elements in the analysis of a hedge that cannot be reduced further.
Formally:
\begin{definition}[Analysis] The analysis $A(h)$ of a hedge $h$ is defined as the least set that satisfies:
\begin{itemize}
	\item $h \subseteq A(h)$;
	\item if $(\E{M}{k}, \E{N}{j}) \in A(h)$ and $(k,j) \in A(h)$ then $(M,N) \in A(h)$;
	\item if $((M_1,N_1),(M_2,N_2)) \in A(h)$ then $(M_1,M_2) \in A(h)$ and $(N_1,N_2) \in A(h)$.
\end{itemize}
Moreover, we define the \emph{irreducibles} $I(h)$ of a hedge $h$ as 
\begin{align*}
I(h) \triangleq A(h) \setminus (&\{ (C,D) \in A(h) \mid C \equiv \E{M}{k},\ D \equiv \E{N}{j},\ (k,j) \in A(h) \}\\ &\cup \{ ((M_1,N_1),(M_2,N_2)) \in A(h)\})
\end{align*}
\end{definition}
It should be noted that all elements that can be reduced in the analysis can be derived from the irreducibles via synthesis, i.e. $S(I(h)) = S(A(h))$.
Lastly, since every hedge is a finite set, its analysis and irreducibles are also finite.

\paragraph{Hedge simulations}
Let us recall that $\Hedges$ and $\Prt$ are the set of all hedges and the set of all processes, respectively. A \emph{hedged relation} $\R$ is a subset of $\Hedges \times \Prt \times \Prt$. We write $h \vdash P \R Q$ when $(h,P,Q) \in \R$. Moreover, we say that $\R$ is \emph{consistent} if, for all $h \in H$, $h \vdash P \R Q$ implies that the hedge $h$ is consistent.

\begin{definition}[Hedged simulation]\label{def:hs}
A consistent hedged relation $\R$ is a \emph{hedged simulation} if, whenever $h \vdash P \R Q$ we have that:
\begin{itemize}
	\item if $P \overset{\tau}{\longrightarrow} P'$ then there exists $Q'$ such that $Q \Longrightarrow Q'$ and $h \vdash P' \R Q'$;
	\item if $P \overset{\overline{a}}{\longrightarrow} (\nu \vec{m})\langle M \rangle P'$ and $\{\vec{m}\} \cap (\fn(P) \cup n(\pi_1(h))) = \emptyset$ then
	there exist $b\in \N$ and a concretion $(\nu \vec{n}) \langle M \rangle Q'$ such that
	$h \vdash a \leftrightarrow b$,
	$\{\vec{n}\} \cap (\fn(Q) \cup n(\pi_2(h))) = \emptyset$,
	$Q \overset{\overline{b}}{\Longrightarrow}(\nu \vec{n}) \langle M \rangle Q'$ and $I(h\cup\{(M,N)\}) \vdash P' \R Q'$;
	\item if $P \overset{a}{\longrightarrow} (x)P'$ then
	there exist $b \in \N$ and an abstraction $(y)Q'$ such that
	$h \vdash a \leftrightarrow b$,
	$Q \overset{b}{\Longrightarrow} (y)Q'$ and
	for all $B \subset \N$ finite such that $B \cap (\fn(P) \cup \fn(Q) \cup n(h)) = \emptyset$ and $h \cup id_B$ is consistent, for all pairs $(M,N)$ of ground terms, if $h \cup id_B \vdash M \leftrightarrow N$ then $h \cup id_B \vdash P'\{M/x\} \R Q'\{N/y\}$.
\end{itemize}
\end{definition}
The first condition requires that for each $\tau$-transition from $P$ there is a path of $\tau$-transition from $Q$ such that the two target processes are in the simulation $\R$.
The second condition requires that for each output transition of $P$, labelled with $\overline{a}$, there is an output transition from $Q$ labelled with $\overline{b}$ (and possibly preceded by some silent transitions); moreover, $a$ and $b$ are homologous in $h$ and the processes after the two output operations are paired in $R$ w.r.t. a consistent hedge that extends $h$ by pairing the two messages $M$ and $N$.
The last condition requires that for each input transition of $P$ with label $a$, there is an input transition from $Q$ labelled with $b$ (and possibly preceded by some silent transitions); moreover, $a$ and $b$ are homologous in $h$ and for all finite set $B$ of fresh names w.r.t. $P$, $Q$ and $h$, the abstractions $(x)P'$ and $(x)Q'$ are paired in the simulation $R$ for each input messages $(M,N)$ homologous by $h \cup id_B$.

\begin{definition}[Hedged bisimulation and bisimilarity]
	A hedged simulation $\R$ is a \emph{hedged bisimulation} if $\R^{-1} = \{(h^{-1},Q,P) \mid h \vdash P \R Q\}$  is also a hedged simulation (where ${h^{-1} = \{ (N,M) | (M,N) \in h\}}$).

\emph{Hedged bisimilarity}, written $\sim$, is the greatest hedged bisimulation, i.e. the union of all hedged bisimulations.
\end{definition}

Remarkably, hedged bisimilarity coincides with barbed bisimilarity \cite{bn:mscs05}.

\section{Decidability of hedged bisimulation for finite processes}\label{sec:decid}
Definition~\ref{def:hs} does not provide us with a means for checking bisimilarity.
In this Section we address this issue, following, when possible, the approach in \cite{huttel:infinity02}.

Clearly, bisimilarity is undecidable for general, infinite processes; hence, we focus on \emph{finite} processes, i.e. without replication.
Even on finite processes decidability of hedged bisimilarity is not obvious, since the third condition in Definition \ref{def:hs} requires to check the equivalence of two abstractions for an infinite number of messages w.r.t. any finite set of fresh names.
In this section we prove that there is a finite bound on the number of these names and messages. If this bound exists, then the hedged bisimilarity is trivially decidable.

The idea behind our result, as in \cite{huttel:infinity02}, is the following: if $(x)P$ is finite, then it can inspect a message (using \emph{let} and \emph{guard} operators) up to a certain depth $k$.
If a message $M$ with more than $k$ nested constructors is received by $(x)P$, then it can only be partially analysed by $P$.
Hence, all messages $M'$ equivalent to $M$ up to depth $k$ will not cause any difference in the execution of $(x)P$, apart from output messages. Indeed, $P\{M/x\}$ and $P\{M'/x\}$ can output different messages (i.e. different parts of $M$ and $M'$ respectively), but we notice that:
\begin{itemize}
	\item the two outputs are derived from $M$ and $M'$ by applying the same operations;
	\item only messages obtained through decryption are interesting, since they can update the hedge $h$ yielding a richer theory.
\end{itemize}

We now proceed to formalize this idea.

\begin{definition}[Maximal constructor depth]
The \emph{maximal constructor depth} $\cc(M)$ of a message $M$ is defined inductively by the clauses
\begin{align*}
\cc(n \in \N) &= 0 &
\cc((M_1,M_2)) &= \max( \cc(M_1), \cc(M_2)) + 1\\
\cc(x \in \V) &= 0 &
\cc(\E{V}{K}) &= \cc(V) + 1
\end{align*}
and then extended to boolean formulas as follows:
\begin{align*}
\cc(\T) &= 0 &
\cc(\phi_1 \land \phi_2) &= \max(\cc(\phi_1),\cc(\phi_2))\\
\cc(\lnot\phi) &= \cc(\phi) &
\cc([M = N]) &= \max(\cc(M),\cc(N))
\end{align*}
\end{definition}

\begin{definition}[$k$-homologous]
Given $h\in\Hedges$ and  $M,N\in\M$, we define
\vspace{-1ex}
\begin{equation*}
h \vdash_k M \leftrightarrow N \overset{\triangle}{\iff} h \vdash M \leftrightarrow N \text{ and } k = \max(\cc(M),\cc(N))
\end{equation*}
Whenever $h \vdash_k M \leftrightarrow N$ we say that $M$ and $N$ are \emph{$k$-homologous} in $h$.
\end{definition}

The notion of \emph{maximal constructor depth} is readily extended to hedges:
\vspace{-1ex}
\begin{equation*}
\cc(h) \defeq \max\{k \mid \exists (M,N) \in h:\ h \vdash_k M \leftrightarrow N \}
\end{equation*}

\begin{lemma}\label{lem:namebound}
	Let $h\in\Hedges$ and $M, N\in \M$ be such that $\max(\cc(M),\cc(N)) = k$. If there is a finite set of names $B \subset \N$ such that
	\begin{itemize}
		\item $B \cap n(h) = \emptyset$;
		\item $h \cup id_B$ is consistent;
		\item $h \cup id_B \vdash_k M \leftrightarrow N$
	\end{itemize}
	then there exists $B' \subset \N$ with the same properties and such that $|B'| \leq 2^k$.
\end{lemma}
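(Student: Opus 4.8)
The plan is to argue by induction on $k$, showing that a set $B$ witnessing $k$-homology can always be replaced by one of size at most $2^k$. First I would recall that $h \cup id_B \vdash M \leftrightarrow N$ means $(M,N) \in S(h \cup id_B)$, so $M$ and $N$ are built from pairs in $h \cup id_B$ using the pairing and encryption constructors. Unfolding the last synthesis step gives a dichotomy on the shape of $(M,N)$: either $(M,N) \in h \cup id_B$ directly, or $(M,N) = ((M_1,M_2),(N_1,N_2))$ with both components homologous, or $(M,N) = (\E{M'}{k'},\E{N'}{j'})$ with $(M',N')$ and the keys $(k',j')$ homologous. In the pairing case, $\cc(M) = \max(\cc(M_i)) + 1$, so each $(M_i,N_i)$ is $k_i$-homologous with $k_i \le k-1$; by induction each needs a set $B_i$ of at most $2^{k-1}$ names, and $B' = B_1 \cup B_2$ has at most $2^k$ names. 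The encryption case is similar but I must be careful: the key pair $(k',j')$ must be homologous in $h \cup id_{B'}$, and $\cc$ of a key is $0$, so the keys are $0$-homologous and need at most $2^0 = 1$ name each — but actually if $(k',j') \in id_{B'}$ this costs one name $k' = j'$, while $(M',N')$ has $\cc$ at most $k-1$ costing at most $2^{k-1}$; the total is bounded by $2^{k-1} + 1 \le 2^k$ for $k \ge 1$.

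The base case $k = 0$ is where $M,N$ are names: either $(M,N) \in h$, requiring no names of $B$ at all (so $B' = \emptyset$, $|B'| = 0 \le 1 = 2^0$), or $M = N \in B$, requiring the single name $M$, so $|B'| = 1 = 2^0$. For the inductive step the essential point is that from a witness $B$ for the whole pair, I can extract witnesses for the subcomponents simply by intersecting $B$ with the names actually occurring in each subcomponent — formally, if $h \cup id_B \vdash M' \leftrightarrow N'$ then $h \cup id_{B \cap n((M',N'))} \vdash M' \leftrightarrow N'$, since the synthesis derivation of $(M',N')$ only ever uses pairs from $h$ together with identity pairs $(c,c)$ where $c$ occurs in $M'$ or $N'$. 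Then I take the union of the (inductively shrunk) witnesses over all the top-level subcomponents and must check that this union $B'$ still satisfies the three bullet conditions: $B' \cap n(h) = \emptyset$ is inherited from $B \supseteq B'$; consistency of $h \cup id_{B'}$ follows because $h \cup id_{B'} \subseteq h \cup id_B$ and consistency is downward-closed under subsets (each consistency clause is universally quantified over pairs in the hedge); and $h \cup id_{B'} \vdash_k M \leftrightarrow N$ holds because synthesis is monotone in the hedge, so the component derivations recombine to give $(M,N) \in S(h \cup id_{B'})$, while $\cc$ values are unchanged.

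I expect the main obstacle to be getting the bookkeeping on the name counts exactly right so that the bound is $2^k$ rather than something weaker. In the pairing case the two halves each inductively cost $2^{k-1}$ and the union gives $2^{k-1} + 2^{k-1} = 2^k$, which is tight; but one has to be slightly careful that the components $(M_i, N_i)$ genuinely have $\cc \le k-1$ and not $\cc = k$, which is immediate from the definition $\cc((M_1,M_2)) = \max(\cc(M_1),\cc(M_2))+1$. The encryption case needs the observation that the key depth is $0$ — since keys are names by the typing rule for $\E{\cdot}{\cdot}$ — so that branch costs at most $2^{k-1}$ for the plaintext plus one name for the key, comfortably under $2^k$ once $k \ge 1$. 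A minor subtlety is that the same name of $B$ might occur in several subcomponents; taking a union (rather than a disjoint union) only helps, so the bound still holds. The only genuinely delicate verification is preserving \emph{consistency} when passing to the smaller set $B'$, which I handle via the downward-closure remark above; everything else is routine structural induction.
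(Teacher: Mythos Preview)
Your induction is correct and amounts to a finer-grained version of the paper's argument. The paper dispenses with induction and simply observes that, since both constructors are binary and $\max(\cc(M),\cc(N)) = k$, the synthesis derivation of $(M,N)$ is a binary tree of height at most $k$, hence has at most $2^k$ leaves; only the leaves lying in $id_B$ contribute names that must be kept, so one can take $B'$ to be those names. Your approach makes the case split (base pair, pairing, encryption) explicit and verifies the three bullet conditions separately, which is more thorough and makes the encryption case (plaintext cost $2^{k-1}$ plus at most one key name) visible rather than implicit. One small bookkeeping fix: you rely on $B' \subseteq B$ to inherit disjointness from $n(h)$ and consistency (``inherited from $B \supseteq B'$''; ``$h \cup id_{B'} \subseteq h \cup id_B$''), but the inductive hypothesis as stated in the lemma does not guarantee that the subcomponent witnesses $B_i$ lie inside $B$. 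Strengthen the induction to prove ``there exists $B' \subseteq B$ with the same properties and $|B'| \le 2^k$''; your base case already yields $B' \in \{\emptyset, \{M\}\} \subseteq B$, and the inductive step then goes through unchanged.
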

\begin{proof}
	If $|B| \leq 2^k$ or $h \vdash M \leftrightarrow N$, then the thesis follows trivially and $B' = B$ or $B=\emptyset$, respectively.
	Otherwise, $M$ and $N$ are in the synthesis $S(h \cup id_B)$ and, at worst, all names in $M$ and $N$ are in $B$.
	Since $k = \max(\cc(M),\cc(N))$ and both \emph{encrypt} and \emph{pairing} are binary constructors, $M$ and $N$ can be represented as binary trees with height $k$. A binary tree of height $k$ has at most $2^k$ leafs, hence $B$ can be reduced to a set $B'$ such that $|B'| \leq 2^k$ without losing any property in the hypothesis.
	\qed
\end{proof}

Lemma~\ref{lem:namebound} leads us to the definition of \emph{d-hedged bisimulation}: a hedged bisimulation up to a bound $d$ on message depth and a bound $2^d$ on fresh names.

\begin{definition}[d-hedged simulation]\label{def:dsim}
For any integer $d \geq 0$, a consistent hedged relation $\R$ is a \emph{$d$-hedged simulation} if whenever $h \vdash P \R Q$ we have that:
	\begin{itemize}
		\item if $P \overset{\tau}{\longrightarrow} P'$ then there exists $Q'$ such that $Q \Longrightarrow Q'$ and $h \vdash P' \R Q'$;
		\item if $P \overset{\overline{a}}{\longrightarrow} (\nu \vec{m})\langle M \rangle P'$ and $\{\vec{m}\} \cap (\fn(P) \cup n(\pi_1(h))) = \emptyset$ then
		there exist $b\in \N$ and a concretion $(\nu \vec{n}) \langle M \rangle Q'$ such that 
		$h \vdash a \leftrightarrow b$,
		$\{\vec{n}\} \cap (\fn(Q) \cup n(\pi_2(h))) = \emptyset$,
		$Q \overset{\overline{b}}{\Longrightarrow}(\nu \vec{n}) \langle M \rangle Q'$
		and $I(h\cup\{(M,N)\}) \vdash P' \R Q'$;
		\item if $P \overset{a}{\longrightarrow} (x)P'$ then
		there exist $b \in \N$ and an abstraction $(y)Q'$ such that
		$h \vdash a \leftrightarrow b$,
		$Q \overset{b}{\Longrightarrow} (y)Q'$ and
		for all $B \subset \N$, where $|B| \leq 2^d$, $B \cap (\fn(P) \cup \fn(Q) \cup n(h)) = \emptyset$ and $h \cup id_B$ is consistent, for all pairs $(M,N)$ of ground terms, if $\exists k \leq d\ h \cup id_B \vdash_k M \leftrightarrow N$ then $h \cup id_B \vdash P'\{M/x\} \R Q'\{N/y\}$.
	\end{itemize}
\end{definition}

\begin{definition}[d-hedged bisimulation and bisimilarity]
	A \emph{$d$-hedged bisimulation} is a $d$-hedged simulation $\R$ such that $\R^{-1} = \{(h^{-1},Q,P) \mid h \vdash P \R Q\}$, where ${h^{-1} = \{ (N,M) | (M,N) \in h\}}$, is also a d-hedged simulation.

	The \emph{d-hedged bisimilarity}, written $\sim^d$, is the greatest d-hedged bisimulation, i.e. the union of all $d$-hedged bisimulations.
\end{definition}

These definitions immediately lead to the following results.
\begin{proposition}\label{prop:dedged}
\begin{enumerate}[(a)]
\item Every hedged bisimulation is also a d-hedged bisimulation, for any $d\geq 0$.
\item For any $d>0$, a d-hedged bisimulation is also a $(d-1)$-hedged bisimulation.
\end{enumerate}
\end{proposition}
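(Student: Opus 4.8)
The plan is to observe that both claims reduce to a single, almost trivial fact: passing from Definition~\ref{def:hs} to Definition~\ref{def:dsim}, and likewise from the $d$-hedged to the $(d-1)$-hedged condition, only \emph{narrows} the range of the universal quantifiers appearing in the input clause, while leaving the $\tau$- and output-clauses verbatim unchanged. A simulation requirement of the shape ``for all $\ldots$ the conclusion holds'' becomes easier to meet when its domain of quantification shrinks, so any relation satisfying the stronger requirement automatically satisfies the weaker one; applying this both to a relation and to its inverse then yields the corresponding statements for bisimulations.

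Concretely, for part~(a) I would first line up Definition~\ref{def:hs} against Definition~\ref{def:dsim}: the two clauses governing $P \overset{\tau}{\longrightarrow} P'$ and $P \overset{\overline a}{\longrightarrow} (\nu\vec m)\langle M\rangle P'$ are literally identical, so nothing needs to be checked there. In the input clause, the $d$-hedged version imposes the extra restrictions $|B| \leq 2^d$ on the fresh-name set and $\max(\cc(M),\cc(N)) \leq d$ on the received pair — the latter being exactly what the abbreviation ``$\exists k \leq d\ h\cup id_B \vdash_k M \leftrightarrow N$'' unfolds to, by the definition of $\vdash_k$. Every other side condition (freshness of $B$ with respect to $\fn(P)\cup\fn(Q)\cup n(h)$, consistency of $h\cup id_B$, and $h\cup id_B \vdash M\leftrightarrow N$) is the same in both definitions. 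Hence the set of triples $(B,M,N)$ for which a $d$-hedged simulation has to produce the conclusion is a subset of the set a hedged simulation must handle, so every hedged simulation is a $d$-hedged simulation. If $\R$ is a hedged bisimulation then $\R$ and $\R^{-1}$ are both hedged simulations, hence both are $d$-hedged simulations, hence $\R$ is a $d$-hedged bisimulation.

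Part~(b) goes through with the identical argument, now comparing Definition~\ref{def:dsim} instantiated at $d$ and at $d-1$: the $\tau$- and output-clauses again coincide, and in the input clause the bounds only tighten, from $|B|\leq 2^d$ to $|B|\leq 2^{d-1}$ (legitimate because $2^{d-1}\leq 2^d$ for $d\geq 1$) and from $\max(\cc(M),\cc(N))\leq d$ to $\max(\cc(M),\cc(N))\leq d-1$. So the relevant quantification domain again shrinks, every $d$-hedged simulation is a $(d-1)$-hedged simulation, and taking inverses upgrades this to bisimulations. I expect no genuine obstacle here: the only point requiring care is unfolding ``$h\cup id_B\vdash_k M\leftrightarrow N$'' correctly and verifying that the freshness and consistency side conditions are syntactically the same across the definitions being compared, so that ``strictly fewer quantified instances'' is indeed the \emph{only} difference. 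Note that Lemma~\ref{lem:namebound} plays no role in this proposition — it is what will later justify the converse-style implication (a $d$-hedged bisimulation being a genuine hedged bisimulation once $d$ is large enough), whereas here we only use the trivial direction.
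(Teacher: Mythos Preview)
Your proposal is correct and follows essentially the same approach as the paper: the paper's proof simply observes that the $d$-hedged conditions are obtained from the hedged ones by adding constraints (the cardinality bound on $B$ and the depth bound on messages), so any hedged simulation is automatically a $d$-hedged simulation, and likewise dropping from $d$ to $d-1$ only tightens these constraints further. Your version is more explicit in unfolding the definitions and noting that the $\tau$- and output-clauses are verbatim identical, but the underlying argument is the same ``shrinking the universal quantifier's domain'' observation.
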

\begin{proof}
\begin{enumerate}[(a)]
\item  By removing the cardinality constraint on the set $B$ and requiring only its finiteness, we get the definition of hedged simulation; hence hedged simulations satisfy the definition of $d$-hedged simulation for any $d$.
\item Let $h$ be a hedge and $d > 0$; it is trivial to check that if $h \vdash P \sim^d Q$ then $h \vdash P \sim^{d-1} Q$. 
\qed
\end{enumerate}
\end{proof}

We now aim to show that, for any two processes $P,Q$, there exists $d \geq 0$ such that $\exists h \in \Hedges\ h \vdash P \sim^d Q \Rightarrow \exists h \in \Hedges\ h \vdash P \sim Q$.
This does not hold for arbitrary infinite processes since these can analyse messages of arbitrary depth.
Therefore, we now consider only the fragment of spi-calculus without replication.

It should be noted that \emph{let} and \emph{guard} operators are the only constructs that can check the structure of messages.  For instance, the process $c(x).[x=t].P$ uses the term $t$ to ``test'' a message received along channel $c$; therefore, any message with depth greater than $t$'s will fail the test, and hence we can consider to send the process only messages with depth up-to that of $t$.
This observation leads to the following definition of  \emph{analysis depth}.

\begin{definition}[Analysis depth]\label{def:ad}
Let $P$ be a finite process. The \emph{analysis depth} of $P$, denoted by $\coa(P)$, is defined inductively by the clauses
\begin{align*}
\coa(0) &= 0 & 
\coa(P|Q) &= \coa(P) + \coa(Q) \\
\coa((\nu n) P) &= \coa(P) & 
\coa(\phi.P) &= \max(\coa(P), \cc(\phi)) \\
\coa(\overline{M}\langle N \rangle.P) &= \coa(P) &
\coa(P+Q) &= \max(\coa(P),\coa(Q)) \\
\coa(M(x).P) &= \coa(P) &
\coa(\text{let } x = t \text{ in } P) &= \coa(P\{t'/x\}) + \mdd(t)
\end{align*}
where, in the case of the \emph{let} operator, $t'$ is any message such that $\cc(t') = \cc(t)$ (e.g., $t'$ can be obtained by nesting $\cc(t)$ encryptions with a fresh name) and the \emph{maximal destructor depth} of a term is defined as follows:
\begin{align*}
\mdd(n) &= 0 &
\mdd(\E{t_1}{t_2}) &= \max(\mdd(t_1), \mdd(t_2))\\
\mdd(x) &= 0 &
\mdd((t_1,t_2)) &= \max(\mdd(t_1), \mdd(t_2))\\
\mdd(\pi_i(t)) &= \mdd(t) + 1 \quad i =1,2  &
\mdd(\D{t_1}{t_2}) &= \max(\mdd(t_1), \mdd(t_2)) + 1.
\end{align*}
\end{definition}

\begin{definition}[Critical depth]
	Let $P$ and $Q$ be two finite processes and let $h$ be a hedge. The \emph{critical depth} $\cri(h,P,Q)$ is defined by
	\begin{equation*}
		\cri(h,P,Q) \triangleq \cc(h) + \max(\coa(P), \coa(Q)) + 1
	\end{equation*}
\end{definition}

\begin{remark}\label{rem1}
	In the definition of analysis depth (Definition~\ref{def:ad}) we have taken into account also the analysis done by matching operators.
	This differs from H\"uttel's work about framed bisimilarity, where in the definition of the analysis depth (there called ``maximal destruction depth'' \cite[Def.~13]{huttel:infinity02})  it is $\coa([M = N].P) = \coa(P)$.
	In fact, it is crucial to consider also the matching operator. As an example, let us consider the following processes and frame-theory pair:
	\begin{align*}
	P &= a(x).[x = \E{a}{a}].\bar{a}\langle a \rangle.0	&
	Q &= a(x).0  &
	(fr,th) &= (\{a\},\emptyset)
	\end{align*}
	According to \cite{huttel:infinity02} the critical depth would be $\cri((fr,th),P,Q) = 0$; hence $\E{a}{a}$ would not be considered as a possible input message, since $\cc(\E{a}{a}) = 1$ and $(fr,th) \not\vdash_0 \E{a}{a} \leftrightarrow \E{a}{a}$.
	Therefore $P$ and $Q$ would behave similarly for each input message tested by H\"uttel's algorithm, leading to incorrectly conclude that $(fr,th) \vdash P \sim Q$.
	This does not happen if the analysis depth takes into account the number of constructors used by matching operators, as in Definition~\ref{def:ad}.
	\qed
\end{remark}

As we will formally see below, when checking if a $d$-hedged simulation exists, we can correlate two input $P \overset{a}{\longrightarrow} (x)P'$ and $Q \overset{b}{\longrightarrow} (x)Q'$ w.r.t. a hedge $h$, simply by testing the equivalence between $P'$ and $Q'$ w.r.t. messages with \emph{maximal constructor depth} less or equal to  $\cri(h,P,Q)$.
Moreover, Lemma~\ref{lem:namebound} limits the number of fresh names we must consider to  $2^{\cri(h,P,Q)}$.

We will now prove that, if $P,Q$ are finite processes and $d = \cri(h,P,Q)$ then 
\begin{equation*}
\exists h \in \Hedges\ h \vdash P \sim^d Q\ \Longrightarrow\ \exists h \in \Hedges\ h \vdash P \sim Q.
\end{equation*}
This result is based on the definition of \emph{d-pruning} and two lemmata that show the equivalence of the transition system when considering only messages $M$ such that $\cc(M) \leq \cri(h,P,Q)$.

\begin{definition}[d-pruning]
	Let $M$ and $N$ be two messages and $h$ a consistent hedge such that $h \vdash M \leftrightarrow N$.
	For $d\geq 0$, the \emph{d-pruning} of $M$ and $N$ w.r.t. $h$, denoted by $\pr_d(h,M,N)$ is defined by cases as follows:
	\begin{itemize}
	\item if $(M,N) \in h$ then for all $d: \pr_d(h,M,N) \defeq (h,M,N)$
	\item if $(M,N) \not\in h$ then 		
	\begin{align*}
		\pr_0(h,M,N) &= (h \cup id_{\{a\}}, a, a) \text{ where } a \in \N \text{ is fresh}
		 \\
		\pr_{d+1}(h,\E{U}{J},\E{V}{K}) &= (h', \E{M'}{J}, \E{N'}{K}) \\
		& \qquad \text{where } (J,K) \in h \\
		& \qquad \text{and } \pr_d(h,U,V) = (h',M', N')
		\\
		\pr_{d+1}(h, (L_1,R_1), (L_2,R_2)) &= (h^{\prime\prime}, (L'_1,R'_1), (L'_2, R'_2))\\
		&\qquad \text{where } \pr_d(h, L_1, L_2) = (h', L'_1, L'_2)\\
		&\qquad \text{and } \pr_d(h', R_1, R_2) = (h'', R'_1, R'_2).
	\end{align*}
	\end{itemize}
\end{definition}
Intuitively, the $d$-pruning of a message pair $(M,N)$ generates a message pair $(M',N')$ where subterms appearing at levels greater than $d$ are replaced by fresh names w.r.t. $h$, $M$ and $N$.

Critical depth and $d$-pruning are readily extended to processes and messages:
\begin{align*}
	\cri(h,P) &\triangleq \cri(h,P,P) &
	\pr_d(h,M) &\triangleq \pr_d(h,M,M)
\end{align*}

\begin{lemma}\label{lem:reductions}
	Let $(x)P$ be an abstraction of a finite process, $h$ be a hedge and $d = \cri(h,P)$. For every message $M$,
	\begin{equation*}
		P\{M/x\} > P'_M\{M/x\} \iff P\{N/x\} > P'_N\{N/x\} 
	\end{equation*}
	where the same reduction rule is used, $N = \pi_2(\pr_d(h,M))$ and 
	\subitem in the case of the guard reduction: $P'_M = P'_N$
	\subitem  in the case of the \emph{let} reduction:  $P'_M = Q\{\sd{t\{M/x\}}/y\}$ and $P'_N = Q\{\sd{t\{N/x\}}/y\}$, for some $Q$.
\end{lemma}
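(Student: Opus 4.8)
The plan is to reduce the statement to a claim about the evaluation of terms and boolean formulas, by a case analysis on which of the three reduction rules --- guard, \emph{let}, or closure under structural congruence --- is applied, since the rule used fixes the top-level shape of $P$ and hence of $P\{M/x\}$. Write $d = \cri(h,P) = \cc(h) + \coa(P) + 1$ and $N = \pi_2(\pr_d(h,M))$; recall that $\pr_d(h,M)$ is only defined when $h \vdash M \leftrightarrow M$, so this homology is an implicit hypothesis of the lemma, and that (by the intuition following the definition of $d$-pruning) $N$ is exactly $M$ with every subterm occurring at structural depth greater than $d$ replaced by a name fresh for $h$ and $M$. In each case the reduct is determined by $M$ only through the success and value of one evaluation performed on $M$ --- or, in the congruence case, not at all --- so it suffices to show that this evaluation is insensitive to the replacement of $M$ by $N$, and then to read off the shapes of $P'_M$ and $P'_N$.

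The technical heart is an auxiliary claim, proved by induction on term structure: for every term $s$ with $\fv(s)\subseteq\{x\}$ and $\mdd(s) \le \coa(P)$ we have $\sd{s\{M/x\}}\!\downarrow \iff \sd{s\{N/x\}}\!\downarrow$ (and when both are defined the two values are again prunings of one another at the appropriate residual depth), and for every boolean formula $\phi$ with $\cc(\phi) \le \coa(P)$ we have $\sd{\phi\{M/x\}} = \sd{\phi\{N/x\}}$. The intuition, which the three conditions of Definition~\ref{def:cong} are meant to make precise, is that an evaluation built from at most $\coa(P)$ nested destructors --- or from a formula of constructor depth at most $\coa(P)$ --- never needs to inspect $M$ below structural depth $\coa(P) < d$, hence cannot tell $M$ and $N$ apart: deterministic decryption together with type preservation ensures that a destructor $\D{\cdot}{\cdot}$ or $\pi_i(\cdot)$ succeeds on $M$ iff it succeeds on $N$ and returns values related by pruning, while equivariance is precisely what allows the names that $\pr_d$ introduces in place of the deep subterms of $M$ to be treated as generic, so that no match $[s_1 = s_2]$ can separate $M$ from $N$.

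Granting this, the cases close quickly. In the \textbf{guard} case $P = \phi.R$, so $P\{M/x\} = \phi\{M/x\}.R\{M/x\} > R\{M/x\}$ exactly when $\sd{\phi\{M/x\}} = \T$; since $\cc(\phi) \le \coa(P)$ by definition of $\coa$, the claim yields $\sd{\phi\{M/x\}} = \sd{\phi\{N/x\}}$, so the guard fires on one side iff on the other and in both cases the continuation is $R$, i.e. $P'_M = P'_N = R$. In the \textbf{let} case $P = \text{let } y = t \text{ in } R$, where after $\alpha$-conversion we may take $y \neq x$ and $y$ not free in $M,N$; then $P\{M/x\} = \text{let } y = t\{M/x\} \text{ in } R\{M/x\}$ reduces precisely when $\sd{t\{M/x\}}\!\downarrow$, and since $\mdd(t) \le \coa(P)$ by the $\coa(\text{let }\ldots)$ clause, the claim gives $\sd{t\{M/x\}}\!\downarrow \iff \sd{t\{N/x\}}\!\downarrow$; as the computed value is a message (hence free of $x$), the reduct $R\{M/x\}\{\sd{t\{M/x\}}/y\}$ may be rewritten $R\{\sd{t\{M/x\}}/y\}\{M/x\}$, so with $Q := R$ we obtain $P'_M = R\{\sd{t\{M/x\}}/y\}$ and $P'_N = R\{\sd{t\{N/x\}}/y\}$ as claimed. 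In the \textbf{structural-congruence} case the reduct is obtained from $P\{M/x\}$ by a rearrangement that neither depends on nor touches the terms occurring in the process, so it has the form $P''\{M/x\}$ with $P'' \equiv P$ a rearrangement of $P$ itself; the same $P''$ then witnesses the corresponding reduction of $P\{N/x\}$, whence $P'_M = P'_N = P''$.

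I expect the auxiliary claim, and specifically the interaction of the term congruence $\equiv$ with $d$-pruning, to be the main obstacle: one must verify that neither a match of bounded constructor depth nor a decryption or projection of bounded destructor depth can distinguish a message from its $d$-pruning, which calls for a simultaneous induction that tracks the residual depth, the set of fresh names produced so far by $\pr_d$, and which coherence axiom is invoked at each constructor; once this bookkeeping is in place the three reduction cases are essentially immediate.
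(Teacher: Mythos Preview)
Your proposal is essentially correct and follows the same strategy as the paper: case analysis on the reduction rule, reducing each case to a claim about evaluation of formulas and terms under the substitutions $\{M/x\}$ and $\{N/x\}$.

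Two remarks. First, you include a third ``structural-congruence'' reduction case, but there is none: the inference rule $\inference{P>P'}{P\equiv P'}$ says that reductions are absorbed into structural congruence, not that $\equiv$ generates reductions. So only the guard and \emph{let} cases exist, exactly the two you handle substantively. Second, where you package the technical core as a general auxiliary claim to be proved by simultaneous induction (tracking residual depth and fresh names), the paper takes a shorter and more direct route: recalling that destructors do not occur in guards, it reduces the matching case $[t_1=t_2]$ via the type-preservation axiom to the shapes $[x=x]$ and $[x=t]$, and then argues by a pure depth comparison---if $\cc(M)\le d$ then $N=M$, while if $\cc(M)>d$ then $\cc(M)\ge\cc(N)\ge d>\cc(t)$ so both sides evaluate to $\F$; the \emph{let} case is handled analogously using $\cc(M)\ge\cc(N)\ge d>\mdd(t)$. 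Your inductive formulation is more careful and would yield a stronger statement (it also tracks how the evaluated values themselves remain prunings of each other), but for the lemma as stated the paper's depth-comparison shortcut suffices and avoids the bookkeeping you anticipate as the main obstacle.
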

\begin{proof}
		Let us consider the case of \emph{guard} reduction, i.e., $P\{M/x\} = (\phi.Q)\{M/x\}$ for some $Q$ and 
		$\sd{\phi\{M/x\}} = \T$. 
		We only need to show that
		\begin{equation*}
		\sd{\phi\{M/x\}} \iff \sd{\phi\{N/x\}}
		\end{equation*}
		We proceed by induction on the structure of $\phi$. The inductive cases are easy:
		\begin{align*}
		\sd{\lnot\phi\{M/x\}} &\iff \lnot \sd{\phi\{M/x\}} \overset{IH}{\iff} \lnot \sd{\phi\{N/x\}} \iff \sd{\lnot\phi\{N/x\}}\\
		\sd{(\phi \land \psi) \{M/x\}} &\iff \sd{\phi\{M/x\}} \land \sd{\psi\{M/x\}}  \\
		 & \overset{IH}{\iff}  \sd{\phi\{N/x\}} \land \sd{\psi\{N/x\}} \iff \sd{(\phi\land\psi)\{N/x\}}
		\end{align*}

		Let us consider the base case of matching predicate $[t_1 = t_2]$.
		If $\fv(t_1) \cup \fv(t_2) = \emptyset$ then the result of the matching is independent from the message and the lemma trivially holds. Otherwise, i.e. $\fv(t_1) \cup \fv(t_2) = \{x\}$, the matching $[t_1 = t_2]\{M/x\}$ is true if and only if $\sd{t_1\{M/x\}} \equiv \sd{t_2\{M/x\}}$.
		Moreover, the first condition of Definition~\ref{def:cong} requires $t_1\{M/x\}$ and $t_2\{M/x\}$ to have the same type. Therefore, for each path in the syntactic tree of $t_1$ ending with an occurence of $x$, there is an equivalent path in $t_2$ ending with either an occurrence of $x$ or a message of the same type of $M$; and vice versa for the occurrences of $x$ in $t_2$. Hence, without loss of generality we can limit ourselves to matchings of the form $[x = x]$ (which is trivial) and $[x = t]$---which we show next. 

		If $\cc(M) \leq d$ then $\pr_d(M) = M$ and the lemma is proved.
		If $\cc(M) > d$ then $\cc(M) \geq \cc(N) \geq d > \cc(t)$, since $d$ is the depth of the pruning. Therefore it holds that $\sd{[x = t]\{M/x\}} = \sd{[x = t]\{N/x\}} = \F$ and the lemma is proved. Moreover, $P'_M = P'_N$.
		
		Let us consider now the \emph{let} reduction, i.e., $P\{M/x\} = (\text{let } y = t \text{ in } Q)\{M/x\}$ for some $t,Q$.
		If $M$ is not analysed in $t$ thought decryption or projections (i.e., $x$ does not occur inside decryption or projections in $t$), then the result of the reduction will happen regardless of $M$ and therefore the thesis holds true.
		The same happens if $\cc(M) \leq d$, since $N = pr_d(M) = M$ and therefore the lemma holds again.
		Otherwise, if $M$ is analysed in $t$ and $\cc(M) > d$ then $\cc(M) \geq \cc(N) \geq d > \mdd(t)$ and the reduction happens for $P\{M/x\}$ if and only if it happens for $P\{N/x\}$.
		From the definition of reduction for \emph{let} reduction it follows that $P'_M = Q\{\sd{t\{M/x\}}/y\}$ and $P'_N = Q\{\sd{t\{N/x\}}/y\}$.
 		\qed
\end{proof}

\begin{lemma}\label{lem:transitions}
	Let $(x)P$ be an abstraction of a finite process, $h$ be a hedge and $d = \cri(h,P)$. For every message $M$,
	\begin{equation*}
		P\{M/x\} \overset{\alpha}{\longrightarrow} P'_M\{M/x\} \iff P\{N/x\} \overset{\alpha}{\longrightarrow} P'_N\{N/x\}
	\end{equation*}
	where the same reduction rule is used,
	$N = \pi_2(\pr_d(h,M))$ and 	
	$P'_M$ depends upon the application of the substitution $\{M/x\}$ and the transition rule used.
\end{lemma}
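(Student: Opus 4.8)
The statement to prove (Lemma~\ref{lem:transitions}) lifts the reduction-level result of Lemma~\ref{lem:reductions} to the full labelled transition relation. The natural approach is an induction on the derivation of $P\{M/x\} \overset{\alpha}{\longrightarrow} P'_M\{M/x\}$ using the rules of Figure~\ref{fig:lts}, with the twist that the critical depth $d = \cri(h,P)$ is large enough to absorb the analysis performed by any subprocess reached along the derivation. Since the two directions of the biconditional are symmetric (pruning a message pair $(M,M)$ against $h$ is itself symmetric, and $N$ depends only on $h$ and $M$), it suffices to argue one direction; I would prove the left-to-right implication and note that the converse follows by the same reasoning reading the derivation backwards, or by re-applying the lemma with the roles of the structural positions swapped.

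\textbf{The induction, case by case.} The base cases are (input) and (output). For (input), $P\{M/x\} = (a(y).R)\{M/x\} \overset{a}{\longrightarrow} (y)(R\{M/x\})$: here no evaluation of a term happens, so the transition is available for $N$ as well, and $P'_M = (y)R = P'_N$ — note that the channel $a$ is a name, hence unaffected by the substitution (or, if the channel position is the variable $x$ itself, then $x:\typeN$ forces $M$ to be a name, so $\cc(M)=0\le d$ and $N=M$). For (output), $P\{M/x\} = (\overline{a}\langle t\rangle.R)\{M/x\} \overset{\overline a}{\longrightarrow} (\nu)\langle \sd{t\{M/x\}}\rangle (R\{M/x\})$: by the WLOG convention stated after the Processes definition, the argument $t$ of an output contains no destructors, so $\sd{t\{M/x\}}$ is simply the message obtained by plugging $M$ into the constructor-context $t$; the evaluation succeeds for $N$ iff it succeeds for $M$ (always, here), and $P'_M$ is the concretion with the substituted body. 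The inductive cases (parallel), (sum), (restriction), (equivalence) are routine: the label $\alpha$ and the relevant congruence/structural side-conditions do not depend on which message was substituted, so the induction hypothesis applied to the premise gives the conclusion; the decomposition of $\coa$ over $|$ and $+$ (additive on $|$, max on $+$) guarantees that $d$ remains an upper bound for the critical depth of each subcomponent, so the hypothesis of the lemma is met on the subderivation. The (interaction) case combines an input and an output subderivation and substitutes the communicated message $\sd{t'}$ into $P'\{M/x\}$; again the communicated term $t'$ is destructor-free, so $M$ is only ever copied, never analysed, in this step. The genuinely substantive case is a silent transition arising from a reduction, i.e.\ the use of the reduction relation $>$ inside the (equivalence)/structural machinery for guard and \emph{let} — and for this we invoke Lemma~\ref{lem:reductions} directly, which already supplies exactly the needed equivalence together with the shape of $P'_M$ (namely $P'_M = P'_N$ for guards, and $P'_M = Q\{\sd{t\{M/x\}}/y\}$ for \emph{let}).

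\textbf{The main obstacle.} The delicate point is bookkeeping the critical depth through the \emph{let} operator. Recall $\coa(\text{let } y = t \text{ in } Q) = \coa(Q\{t'/y\}) + \mdd(t)$: a \emph{let} both consumes $\mdd(t)$ levels of destructor analysis now and may bind a variable $y$ that is later re-analysed inside $Q$. I must check that after firing a \emph{let} reduction the residual process $P'_M = Q\{\sd{t\{M/x\}}/y\}$ still has critical depth $\le d$ relative to the (possibly) pruned message — in other words, that the "$+\,\mdd(t)$" summand in $\coa$ correctly pays for the depth that the \emph{let} strips off $M$, so that the depth $\cc(M)\le d$ passed to the subderivation is preserved under the induced substitution. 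Equivalently: if $\cc(M) > d$ and $M$ is analysed in $t$, then $\cc(M) > \mdd(t)$ forces the decryption/projection inside $t$ to fail (as in Lemma~\ref{lem:reductions}), so in fact the \emph{let} does not fire at all for deep messages, and the interesting subcase is precisely $\cc(M)\le d$ where $N=M$ and there is nothing to prove. Making this dichotomy airtight — the analysis either stays within the budget $\mdd(t)$ (and then $d - \mdd(t)$ suffices for the residual) or exceeds it (and then the transition is blocked symmetrically for $M$ and $N$) — is the heart of the argument; once it is in place, everything else is an application of Lemma~\ref{lem:reductions} and a straightforward structural induction.
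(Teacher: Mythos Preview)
Your proposal is correct and follows essentially the same route as the paper: structural induction on the transition derivation, with the base cases (input, output) handled by observing that the channel must be a name (or else $M\in\N$ and $N=M$), the reduction steps inside (equivalence) delegated to Lemma~\ref{lem:reductions}, and the remaining rules (parallel, sum, restriction, interaction) dispatched by the inductive hypothesis. Your discussion of the critical-depth bookkeeping through \emph{let} is more explicit than the paper's treatment (which simply invokes Lemma~\ref{lem:reductions} and remarks that the proof steps mirror the parallel case), but this is additional care on the same point rather than a different strategy.
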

\begin{proof} By induction on the transition rules.
	Let us start with the \emph{parallel} rule. Let $P = Q|U$. By inductive hypothesis
	\begin{equation*}
	Q\{M/x\} \overset{\alpha}{\longrightarrow} Q'\{M/x\} \iff Q\{N/x\} \overset{\alpha}{\longrightarrow} Q'\{N/x\}
	\end{equation*}
	and therefore, the application of the \emph{parallel} rule leads to
	\begin{equation*}
	(Q|U)\{M/x\} \overset{\alpha}{\longrightarrow} (Q'|U)\{M/x\} \iff (Q|U)\{N/x\} \overset{\alpha}{\longrightarrow} (Q'|U)\{N/x\}
	\end{equation*}
	and $P'_M = P'_N$. The same reasoning can be used for \emph{sum} and \emph{restriction} rules.
	For \emph{equivalence} rules, only reduction rules modifies the processes substantially. Therefore, we can use Lemma~\ref{lem:reductions} and the proof steps are equivalent to the ones of the \emph{parallel} rule.
	
	Let us consider now the \emph{input} rule, i.e., $P\{M/x\} = (t(y).Q)\{M/x\}$. Since $(t(y).Q)\{M/x\}$ must be a ground process, $\fv(t(y).Q) \subseteq \{x\}$ and $t$ can only be a name or equal to $x$. If $x \neq t \in \N$, then the transition is independent from $M$ and will also occur for $N$. Otherwise, $t = x$ and the transition will occur only if $M$ is a name. Moreover, if $M \in \N$ then $M = pr_d(M) = N$ and therefore the transition will also occur for $(t(y).Q)\{N/x\}$. It also holds that $P'_M = P'_N$.
	
	The case for \emph{output} rules is similar to \emph{input}'s, since the transition of a term $(t\langle X \rangle.Q)\{M/x\}$ only depends on the channel $t$.
	
	Let us now prove the lemma for \emph{interaction} rules. By inductive hypothesis:
	\begin{align*}
	P\{M/x\} \overset{n}{\longrightarrow} ((y)P')\{M/x\} & \iff P\{N/x\} \overset{n}{\longrightarrow} ((y)P')\{N/x\} \\
	Q\{M/x\} \overset{\overline{n}}{\longrightarrow} ((\nu m) \langle T \rangle Q')\{M/x\} & \iff
	  Q\{N/x\} \overset{\overline{n}}{\longrightarrow} ((\nu m) \langle T \rangle Q')\{N/x\}
	\end{align*}
	therefore, the application of the \emph{interaction} rule leads to
	\vspace{-1ex}
	\begin{multline*}
	(P|Q)\{M/x\} \overset{\tau}{\longrightarrow} ((\nu m)(P\{T\{M/x\}/y\} | Q'))\{M/x\}\\
	\iff
	(P|Q)\{N/x\} \overset{\tau}{\longrightarrow} ((\nu m)(P\{T\{N/x\}/y\} | Q'))\{N/x\}
	\end{multline*}
	which is our thesis, and $P'_X = ((\nu m)(P\{T\{X/x\}/y\} | Q'))$ for $X \in \{M,N\}$.
	\qed
\end{proof}

\begin{theorem}\label{th:dedged}
	Let $P$ and $Q$ be two finite processes. Then:
	\begin{equation*}
	\exists h \in \Hedges\ h \vdash P \sim Q \text{ if and only if } \exists h \in \Hedges\ h \vdash P \sim^{\cri(h,P,Q)} Q
	\end{equation*} 
\end{theorem}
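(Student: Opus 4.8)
I would prove the two implications separately. The forward implication is immediate: if $h \vdash P \sim Q$, then by Proposition~\ref{prop:dedged}(a) the witnessing hedged bisimulation is in particular a $d$-hedged bisimulation for every $d$, so instantiating $d = \cri(h,P,Q)$ gives $h \vdash P \sim^{\cri(h,P,Q)} Q$. All the work is in the converse, so assume $h \vdash P \sim^{\cri(h,P,Q)} Q$; I must produce some hedge $g$ with $g \vdash P \sim Q$, and I expect $g = h$ to work.

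For the converse the plan is to exhibit a hedged bisimulation $\R$ with $h \vdash P \R Q$, so that $\R \subseteq {\sim}$. The natural candidate is assembled from the triples reached while playing the bounded simulation game from $(h,P,Q)$: I would take $\R$ to collect all $(g,R,S)$ admitting some $d \ge \cri(g,R,S)$ with $g \vdash R \sim^{d} S$, and then check it is a hedged bisimulation. Two bookkeeping facts make the set-up symmetric and consistent, so that only the hedged \emph{simulation} clauses need checking: inverting a hedge changes neither its names nor the maximal constructor depths of its homologous pairs, hence $\cc(g^{-1}) = \cc(g)$ and $\cri$ is insensitive to inversion, while each $\sim^{d}$ --- being the greatest $d$-hedged bisimulation --- equals its own inverse; and consistency of the hedges in $\R$ is inherited from the $\sim^{d}$.

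For a $\tau$-move or an output move of $R$, the $d$-hedged clause directly delivers the matching $S$-move and a continuation triple related by the same $\sim^{d}$; the only extra step is to see that neither a reduction, nor an interaction, nor the passage from $g$ to $I(g\cup\{(M,N)\})$ forced by an output raises $\cc(\cdot)+\max(\coa(\cdot),\coa(\cdot))$ above $d$ --- and here the clauses of Definition~\ref{def:ad} are tuned precisely for this, the $\cc(\phi)$ summand of the \emph{guard} case and the $+\mdd(t)$ summand together with the substitution $P\{t'/x\}$ in the \emph{let} case paying in advance for the substitutions these steps perform. For an input move $R \overset{a}{\longrightarrow} (x)R'$ the $d$-hedged clause only constrains continuations for pairs of depth $\le d$ and name sets of size $\le 2^{d}$, whereas Definition~\ref{def:hs} ranges over all finite $B$ and all homologous $(M,N)$. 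Given such $B$ and $(M,N)$, I would first discard from $B$ the names absent from both $M$ and $N$ and apply Lemma~\ref{lem:namebound} to assume $|B| \le 2^{\max(\cc(M),\cc(N))}$; then $d$-prune $M,N$ at depth $e = \cri(g,R',S') \le d$ to $M^{\circ},N^{\circ}$, which stay homologous, have depth $\le d$ and add at most $2^{d}$ fresh names, so the bounded clause applies to them; and finally transport the conclusion from $(R'\{M^{\circ}/x\},S'\{N^{\circ}/y\})$ to $(R'\{M/x\},S'\{N/y\})$ using Lemmas~\ref{lem:reductions} and~\ref{lem:transitions}, which keep the two computations in lockstep with residuals still differing only by a pruning substitution.

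The hard part will be this transport together with the depth bookkeeping around it. Lemmas~\ref{lem:reductions} and~\ref{lem:transitions} only cover one step out of $P\{M/x\}$, while the input clause needs the whole computation of $R'\{M/x\}$ to be shadowed by that of $R'\{M^{\circ}/x\}$ --- including through later outputs, which feed fragments of $M$ (resp.\ of its pruning) into the hedge, and later inputs of messages synthesised from the enriched hedge. I would therefore isolate the invariant ``the two configurations differ only by a block of fresh names and a pruning substitution'', show it stable under every operational rule and under enlargement of the hedge, and --- the genuinely delicate point --- argue that maintaining it keeps the critical depth within the original budget $\cri(h,P,Q)$ despite the fact that substituting a (pruned) message can momentarily inflate the purely syntactic analysis depth. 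By comparison, the monotonicity needed in the $\tau$/output cases, though it too must cope with the substitutions hidden inside reduction and interaction, is a routine structural induction.
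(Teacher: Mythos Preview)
Your forward direction via Proposition~\ref{prop:dedged} and the overall plan for the converse --- exhibit a candidate hedged bisimulation and verify its clauses with Lemmas~\ref{lem:namebound}, \ref{lem:reductions} and~\ref{lem:transitions} --- coincide with the paper. The difference is in which relation you take as the candidate.

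Your initial $\R = \{(g,R,S) \mid \exists d \ge \cri(g,R,S).\ g \vdash R \sim^{d} S\}$ will not close under the input clause on its own. To place $(g \cup id_B,\, R'\{M/x\},\, S'\{N/y\})$ back in this $\R$ you would need $g \cup id_B \vdash R'\{M/x\} \sim^{d'} S'\{N/y\}$ for some $d' \ge \cri(g\cup id_B, R'\{M/x\}, S'\{N/y\})$; but that critical depth grows with $\cc(M)$ whenever $x$ occurs under a guard in $R'$, while Lemmas~\ref{lem:reductions}--\ref{lem:transitions} only give you a one-step correspondence with the pruned pair $(R'\{M^\circ/x\},S'\{N^\circ/y\})$, not a $\sim^{d'}$-fact about the unpruned one. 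You diagnose this yourself in the last paragraph and propose the right repair --- carry the invariant ``the two configurations differ only by fresh names and a pruning substitution'' --- but that invariant must \emph{be} the candidate relation, not an auxiliary argument layered on top of the first $\R$.

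This is precisely what the paper does: its candidate is
\[
\R = \bigl\{\, (h,\, P'\{M/x\},\, Q'\{N/y\}) \ \big|\ (h',M',N') = \pr_d(h,M,N),\ h' \vdash P'\{M'/x\} \sim^{d} Q'\{N'/y\} \,\bigr\},
\]
so membership is witnessed by a pruning rather than by a $\sim^{d}$-fact about the unpruned processes. With the invariant built into $\R$, the verification becomes the short diagram chase the paper sketches: a step of $P'\{M/x\}$ is transported by Lemma~\ref{lem:transitions} to a step of $P'\{M'/x\}$, matched inside $\sim^{d}$ by a step of $Q'\{N'/y\}$, and transported back to $Q'\{N/y\}$; the target triple is again of pruned form and hence in $\R$. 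Your more explicit depth bookkeeping for the $\tau$/output cases and your appeal to Lemma~\ref{lem:namebound} in the input case are finer than what the paper writes out and would slot directly into this reorganised proof.
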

\begin{proof}
	By Proposition \ref{prop:dedged}, any hedged bisimulation is also a d-hedged bisimulation; hence, it suffices to show that if there is $h \in \Hedges$ such that $h \vdash P \sim^{\cri(h,P,Q)} Q$, then there is $h \in \Hedges$ such that $h \vdash P \sim Q$. 
	This follows from Lemmata~\ref{lem:reductions} and \ref{lem:transitions}, since we already shown that reductions and transition system is the same whenever we perform a pruning of depth $\cri(h,P,Q)$.
	Moreover, it holds that
	\begin{align*}
	\R = \{ (h,P,Q) \mid & \exists h' \in \Hedges\ \exists P', Q' \in \Prt\ \exists M, N \in \M \text{ s.t.} 	\\	
		& P = P'\{M/x\}, Q = Q'\{N/y\},
	    (h', M', N') = \pr_d(h,M,N), \\
		& h' \vdash P'\{M'/x\} \sim^d Q'\{N'/y\},
		\text{for } d = \cri(h,P,Q)
		\}
	\end{align*}
	is a hedged bisimulation, since for a transition $\alpha$ it holds that
	\begin{align*}
	& P'\{M/x\} \overset{\alpha}{\longrightarrow} P''\{M/x\}\ \xLeftrightarrow{\text{Lemma~\ref{lem:transitions}}}\  P'\{M'/x\} \overset{\alpha}{\longrightarrow} P''\{M'/x\}\\ 
	&\xRightarrow{\sim^d}\ Q'\{N'/y\} \xRightarrow{\ \alpha\ } Q''\{N'/y\}\
	\xLeftrightarrow{\text{Lemma~\ref{lem:transitions}}}\
	Q'\{N/y\} \xRightarrow{\ \alpha\ } Q''\{N/y\}
	\end{align*}
	and vice versa.
	Furthermore there exists $h''$, obtained by updating $h$ with the effects of $\alpha$, such that, from Definition~\ref{def:dsim}:
	\[
	 (h''', M'', N'') = \pr_{d'}(h'', M', N') \text{ and } h''' \vdash P''\{M''/x\} \sim^{d'} Q''\{N''/y\}
	\]
	where $d' = \cri(h'', P'', Q'')$. Therefore $(h'', P''\{M/x\}, Q''\{N/y\}) \in \R$.
	\qed
\end{proof}

\begin{figure}[t]
\begin{lstlisting}[mathescape=true]
$\HB(h,P,Q)$ =
	for each $P \overset{\tau}{\longrightarrow} P'$
		select $Q \Longrightarrow Q'$ such that
			$\HB(h,P',Q') \land \HB(h^{-1},Q',P')$
	for each $P \overset{\overline{a}}{\longrightarrow} (\nu \vec{m})\langle t \rangle P'$
		select $Q \overset{\overline{b}}{\Longrightarrow} (\nu \vec{n})\langle t' \rangle Q'$ such that 
			$h_O := I(h \cup \{(a,b)\} \cup \{(\sd{t},\sd{t'})\})$ consistent and
			$\HB(h_O,P',Q')\land \HB(h_O^{-1},Q',P')$
	for each $P \overset{a}{\longrightarrow} (x)P'$
		let $d = \cri(h,P,Q)$
		select $Q \overset{b}{\Longrightarrow} (x)Q'$ and $B \subset \N$ such that
			$|B| = 2^d$, $B \cap (\fn(P) \cup \fn(Q) \cup n(h) \cup \{a,b\}) = \emptyset$,
			$h_I := h\cup\{(a,b)\}\cup id_B$ consistent and
			for each $(M,N)$ such that $\exists k \leq d:\ h_I \vdash_k M \leftrightarrow N$
				$\HB(h_I,P'\{M/x\},Q'\{N/x\}) \land \HB(h_I^{-1},Q'\{N/x\},P'\{M/x\})$
\end{lstlisting}
\caption{Algorithm for deciding hedged bisimilarity. The  \emph{select} statement implements a nondeterministic exploration of the (finite) possible choices of its argument, until the condition is satisfied; it returns $\T$ if successful, $\F$ otherwise.
}\label{fig:algorithm}
\end{figure}

Since every quantification is bounded, $d$-hedged bisimilarity is decidable on finite processes.
An algorithm is shown in Figure~\ref{fig:algorithm}.
For $P,Q$ two finite processes and $h$ a hedge (which represents the initial knowledge of the attacker, e.g. public channels, keys, etc.), $\HB(h,P,Q)\wedge\HB(h^{-1},Q,P)=\T$ if and only if there is a $d$ such that  $(h,P,Q)$ are in a $d$-hedged simulation.
Hence, by Theorem~\ref{th:dedged}:
\begin{equation*}
h \vdash P \sim Q \iff \HB(h,P,Q)\wedge \HB(h^{-1},Q,P).
\end{equation*}

\section{Conclusions and further work}\label{sec:concl}
In this paper we have proved that hedged bisimilarity is decidable on finite processes of the spi calculus.
Our algorithm, which generalizes the ideas in \cite{huttel:infinity02}, can be readily applied to different encryption/decryption schemata just by changing the congruence rules, as long as some mild conditions are satisfied.
Actually, a possible future work is to investigate the algebraic laws needed to represent in the structural congruence the properties of various encryption algorithms: often these laws are omitted from formalizations, leading to security flaws in protocols.
Another direction is to consider other fragments of the spi-calculus beyond finite processes; depth- and restriction-bounded processes are particularly promising.

\end{document}